\title[MPC \textit{vs}. NNs]{Mapping back and forth between model predictive control and neural networks}
\newcommand{\ts}[1]{{\textnormal{#1}}}
\newcommand{\ie}{\emph{i.e.,}\ }
\newcommand{\eg}{\emph{e.g.}\ }
\newcommand{\Rset}{\mathbb{R}}
\newcommand{\mc}{\mathcal}
\newcommand{\mb}{\mathbf}
\newcounter{subeqn} %
\author{%
 \Name{Ross Drummond} \Email{ross.drummond@sheffield.ac.uk}\\
 \Name{Pablo R. Baldivieso-Monasterios} \Email{p.baldivieso@sheffield.ac.uk}\\
 \addr Dept. of Automatic Control and Systems Engineering,  University of Sheffield,  Sheffield, S1 3JD, UK. \\
  \AND
 \Name{Giorgio Valmorbida} \Email{giorgio.valmorbida@centralesupelec.fr}\\
 \addr L2S, CentraleSup\'elec, CNRS, Universit\'e Paris-Saclay, Inria-Saclay Projet DISCO, Gif--sur--Yvette
91192, France.%
}
\begin{document}

\maketitle

\begin{abstract}%
Model predictive control (MPC) for linear systems with quadratic costs and linear constraints is shown to admit an exact representation as an implicit neural network.  A method to ``unravel'' the implicit neural network of MPC into an explicit one  is also introduced. As well as building links between model-based and data-driven control, these results emphasize the capability of implicit neural networks for representing  solutions of optimisation problems, as such problems are themselves implicitly  defined functions. 
\end{abstract}

\begin{keywords}%
Model predictive control, neural networks
\end{keywords}

\section*{Introduction}
  For control problems beyond the capabilities of the classical methods, such as PID controllers, two  important benchmarks are model predictive control (MPC) and reinforcement learning policies built upon neural networks (NNs). These two approaches are generally quite distinct; MPC is a \textit{model-based} method which uses a physical model of the system to predict how the trajectory of the system will evolve in the future and then solves an optimisation problem to optimally control the system whilst ensuring some control and safety constraints. 
  Neural networks are the foundations of many \textit{data-driven} control methods, with the neural networks trained to control the system by optimising over trajectory data  from experiments. 
  The properties of these two methods are also distinct; MPC can provide strong guarantees on robustness and optimality whilst NNs can be applied to problems where the problem's objectives, constraints and models may be unknown.

  In spite of their differences, there has been strong interest in linking MPC and NN controllers. Typically, the motivation to link the two has been driven by the need to improve the scalability of MPC, since evaluating a NN is typically much simpler than computing the solution of an MPC problem.  Various schemes have been proposed to link MPC and NNs in this way, including \cite{fahandezh2020proximity,xu2021lattice,karg2020efficient,ferlez2020aren,aakesson2006neural,kittisupakorn2009neural,xiao2022learning,hertneck2018learning,soloperto2020augmenting} and \cite{ahn2022model} which used imitation learning. An overriding theme of these approaches is to exploit the fact that the optimal solution of linear quadratic MPC is a piece-wise linear function, and so a piece-wise linear function in the form of a neural network should approximate it well. As these studies show, this motivation is well-founded. 
   However, even then, approximating MPC with a NN generally leads to  approximation errors, and it is only recently that such errors have been bounded
 \cite{drummond2022bounding, fabiani2022reliably, schwan2023stability}. In some special cases, exact representations of  MPC policies using explicit  NNs have been obtained; however strong restrictions on the system dynamics have been imposed in these cases \cite{moritz3}, for example in \cite{moritz2} where the analysis was restricted to single state systems .

  Together, these results highlight the growing interest in linking the disparate control policies based upon NNs and MPC. Not only will this help improve the scalability of MPC but it also promises to strengthen the bridges between data-driven and model-based control. 
  
\textbf{Contributions:} The main results of this paper are:
\begin{enumerate}
    \item An analytic representation for an MPC law for linear systems with a quadratic cost and linear input/state constraints in terms of implicit neural networks.
    \item A method to unravel the implicit neural network for MPC into an explicit NN with a bounded approximation error.
    \item A procedure to reconstruct the MPC cost from an explicit NN.
\end{enumerate}
 The main insight of the paper is that linear quadratic MPC (as in a control policy where the control action is determined by solving a quadratic programming problem such as Definition \ref{def:mpc-qp}) has an exact representation as an implicit NN (with the implicit NN structure detailed in \cite{el2021implicit}- although also referred to as equilibrium networks \cite{bai2019deep,revay2023recurrent}).  This is achieved by building upon the representation in terms of linear complementarity problems detailed in  \cite{valmorbida2023quadratic}.  Figure \ref{fig:schematic} illustrates this connection between MPC and NNs mapped out in this paper.  The focus on implicit neural networks explored in this paper follows from the observation that most control policies defined by the solutions of optimisation problems, such as MPC, are themselves implicitly defined functions, and so implicit neural networks should  approximate them well. 
 
The presented results highlight the benefits of focusing on \textit{implicit-to-implicit} mappings for linking MPC and NNs, rather than the \textit{explicit-to-explicit} mappings  considered previously, e.g. in \cite{fahandezh2020proximity,xu2021lattice,karg2020efficient,ferlez2020aren,aakesson2006neural,kittisupakorn2009neural,hertneck2018learning,soloperto2020augmenting}.  The main result of this paper is similar in spirit to \cite{moritz2} on generating exact NN representations of MPC, but by focusing on the implicit mapping, it does not impose strict restrictions on the system, such as the dimension of the state-space being one. With our second result, the obtained implicit MPC is ``\textit{unravelled}'' into an  approximating explicit NN, which can be simply evaluated and has the classical feed-forward NN architecture. Together, these results  point towards a framework to help unify model based control and data-driven control methods built on top of neural networks. As well as helping to improve the computational scalability of MPC, bringing model-based and data-driven methods closer together also promises to improve the data-driven methods' explainability (by associating the black-box of the NN with a predictive model) and robustness (by connecting their stability analysis for control purposes to the well developed field of robust control).

\begin{figure}
         \centering
         \includegraphics[width=0.9\textwidth]{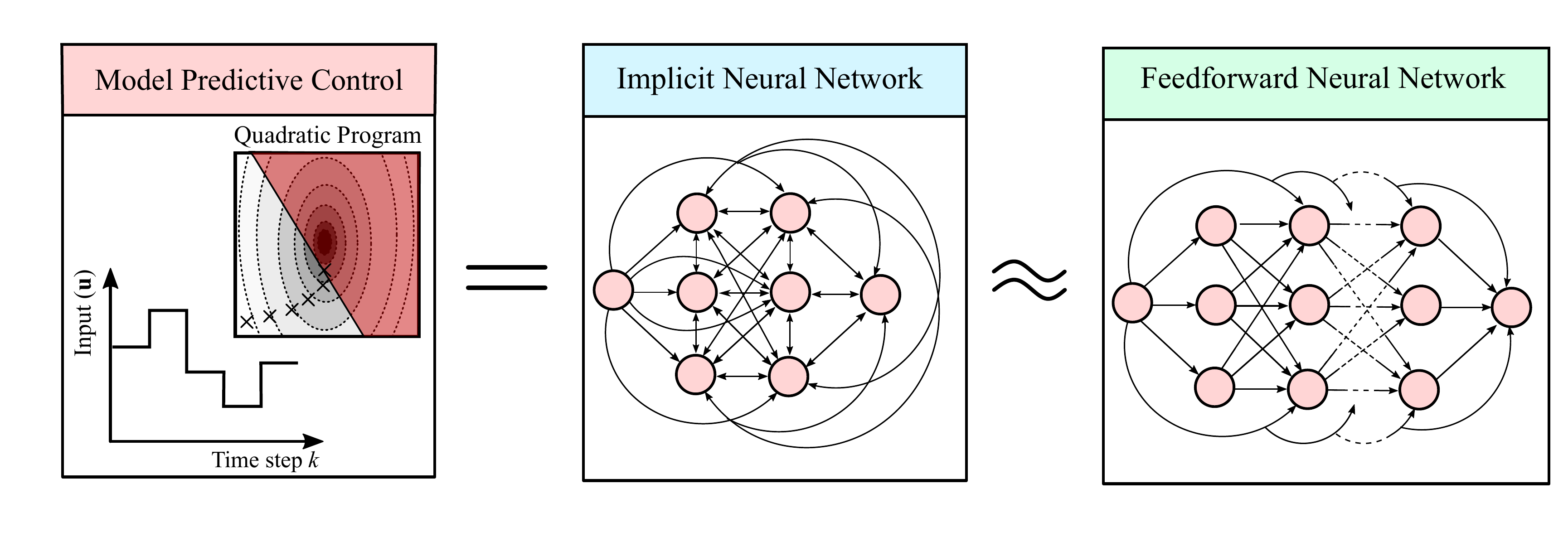}
         \caption{Linear quadratic model predictive control (MPC) is shown to be structured as an implicit neural network. This implicit neural network is then approximated by a feed-forward one. This neural network structure builds upon the results of \cite{valmorbida2023quadratic} on the linear complementarity formulation of the \textit{MPC-QP} problem.
         }
         \label{fig:schematic}
     \end{figure} 

 \subsubsection*{Notation}
 Real matrices $A$ of size $m \times n$ are denoted $A \in \mathbb{R}^{n \times m}$. Positive-definite matrices $A \succ 0$ of size $n$ are denoted $A \in \mathbb{S}^n_{\succ 0}$. Vectors $p$ of length $n$ are $p \in \mathbb{R}^n$.  The space of diagonal matrices of dimension $n$ are $\mathbb{D}^n_+$. The identity matrix of dimension $n$ is $I_n$.  
 Symmetric matrices $A$ of dimension $n$ are denoted $A \in \mathbb{S}^n$. 
 The $i$-th row of a matrix $A$ is denoted $A_{(i,:)}$. The $i$-th element of a vector $x \in \mathbb{R}^n$ is likewise denoted $x_i$. The set of natural numbers with zero is denoted $\mathbb{N}_0 = \mathcal{N} \cup \{0\}$. Given a nonlinear function $\phi\colon\Rset^n\to\Rset^n$, the notation $\phi(s)\in\Rset^{n}$ for $s = [s_1,\ldots,s_n]^\top\in\Rset^{n}$ refers to a component-wise application, \ie $\phi(s) = [\phi(s_1),\ldots,\phi(s_n)]^\top$.

\section{Linear Quadratic MPC }

Consider the discrete linear time-invariant state-space model
\begin{align}\label{sys}
x[k+1]  = Ax[k]+ Bu[k],
\end{align}
where $x[k]\in\Rset^n$ and $u[k]\in\Rset^m$ are the state and control respectively evolving on time instants $k~=~0,\,1,\,2,\,\dots$. We use the  notation $x[k] = x$ throughout to simplify the analysis. A state ${x}\in\Rset^{n}$ and an $N-$step sequence of control actions $\mb{u} = [u[k]^\top,\ldots,u[k+N-1]^\top]^\top$ generate the $N-$step state predictions sequences $\mb{x} = [{x}^\top,x[k+1]^\top,\ldots,x[k+N]^\top]^\top$. Here, the inputs $\mb{u}$ are 
 constrained by the state-dependent polytopic set $\mc{U}(x)= \{\mathbf{u}: G\mathbf{u}  \leq S_ux+w\}$ for some $G \in \mathbb{R}^{n_u \times N m}$, $S_u \in \mathbb{R}^{n_u \times n}$ and $w \in \mathbb{R}^{n_u}$.

\subsection{MPC problem}

From a given state $x\in\Rset^n$, the MPC control input sequence $\mb{u}(x)$ is computed from the quadratic programme (QP):
\begin{subequations}
  \begin{align}
    \min_{\mb{u}(x)}~& x[k+N]^{\top}Px[k+N]+\sum_{i=1}^{N-1} x[k+i]^{\top}  Q_ix[k+i] + u[k+i]^{\top} R_iu[k+i] 
    \\   \ts{subject to}~  x[k] &= x, \nonumber
    \\ G\mb{u}  &\leq S_ux+ w,
    \\  x[i+1] &= Ax[i] + \tilde{B}_i\mb{u},
  \end{align}
  \label{QP1}
\end{subequations}
where $\tilde{B}_i = [A^{i-1}B~~\ldots~ AB~~B~~\mb{0}_{n\times (N-i-1)m}]\in\Rset^{n\times mN}$,  $Q_i\in \Rset^{n\times n}$, $P\in \Rset^{n\times n}$, $R_i\in \mathbb{R}^{m\times m}$, $Q_i \succeq 0$,  $R_i \succ 0$, $\forall i  = 1,\ldots,  N -1$, and $P\succ 0$. Exploiting the linear dynamics of \eqref{sys} allows the above to be equivalently written as the following QP (referred to here as the \textit{MPC-QP}).
\begin{definition}[MPC-QP]\label{def:mpc-qp} With a positive-definite Hessian $H \in \mathbb{S}_{\succ 0}^{N}$, the solution of the MPC problem at a state $x\in\Rset^{n}$ defined in \eqref{QP1} is given by
\begin{subequations}\label{eq:mpcQP}\begin{align}
    \mathbf{u}^*(x) =  & \arg \min_{ \mathbf{u}} \mathbf{u}^{\top} H \mathbf{u} + 2x^{\top}F^{\top}\mathbf{u}, \\
     & \text{subject to } G\mathbf{u}  \leq S_ux+w.
\end{align}\end{subequations}
\end{definition}
The solution of \textit{MPC-QP} for a given state $x\in\Rset^n$ is referred to here as \textit{implicit MPC}, since the computed optimal input sequence is an implicit function of the current value of the state $x$, \ie $\mathbf{u}^*(x)$. To obtain MPC control action $u_\ts{mpc}\in\Rset^m$ from the vector $\mathbf{u}^*(x)$, we use
\begin{align}
    u_\ts{mpc} = [1,\,0,\,\dots, 0]\mathbf{u}^*(x).
\end{align}


{\flushleft \textbf{Assumption 1}}
  {It is assumed that a unique solutions $  \mathbf{u}^*(x) $ exists for the MPC-QP problem of \eqref{eq:mpcQP} for all $x \in \mathbb{R}^n$. This feasibility assumption will translate into an assumption on the well-posedness of the equivalent implicit neural network derived in this paper. 
  }


\section{Implicit representation of MPC}\label{sec:MPC}
 
%

As discussed in the introduction,  there has been a concentrated effort to link the disparate control schemes of MPC and NN controllers,  so as to accelerate, scale and ease the implementation of MPC. As far as the authors are aware, existing studies relating NNs and MPC, such as \cite{fahandezh2020proximity,xu2021lattice,karg2020efficient,ferlez2020aren,aakesson2006neural,kittisupakorn2009neural,hertneck2018learning,soloperto2020augmenting}, are motivated by the fact that the explicit MPC policy is a piece-wise affine function, so should be well-approximated by an \textit{explicit NN}, in particular one that is also a piece-wise affine function with \emph{ReLU} activation functions.

\begin{definition}\label{def:exp}
  An explicit NN $f\colon\Rset^{n}\to\Rset^{Nm}$ with $x\mapsto\mb{u}^*(x)$ is defined by the recurrence
\begin{subequations}\begin{align}
    z[j+1] &= \phi\left(\sum_{i = 0}^{j}W^{j,i}z[i]+Y^{j}x+b^j\right),\label{eq:zexplicit}
    \\
    \mathbf{u}^*(x) &= W_fz[J] + Y_fx + b_f,
\end{align}\end{subequations}
for some activation function $\phi(\cdot)$, hidden layer depth $j = 0,\, \dots,\, J$, $W^{j,i} \in \mathbb{R}^{n_j \times n_i}$ (where $n_i$ and $n_j$ are the dimensions of layers $i$ and $j$, respectively),  $Y^{j} \in \mathbb{R}^{n_j \times n}$,  $b^j \in \mathbb{R}^{n_j}$, $W_f \in \mathbb{R}^{Nm \times n_J}$, $Y_f\in \mathbb{R}^{Nm \times n}$  and $b_f \in \Rset^{Nm}$. 
\end{definition}
Several methods have been proposed to train explicit neural networks that approximate MPC, e.g. in~\cite{zhang2020near,chen2022large,karg2021approximate}.
But, in general, these methods all require the standard training step to obtain the NN from data generated by sampling the MPC solution. Training in this way can introduce errors which are challenging to bound \cite{drummond2022bounding, fabiani2022reliably,schwan2023stability}.

By contrast, the focus of this paper is on \textit{equating} MPC policies to the \textit{implicit} neural networks of~\cite{el2021implicit}. The following definition of an implicit neural network is used here. 


\begin{definition}\label{def:imp}
An implicit NN $f\colon\Rset^{n}\to\Rset^{Nm}$ with $x\mapsto\mb{u}^*(x)$ is defined by the solution of the following algebraic system of equations
\begin{subequations}\label{imp_def_2}\begin{align}
    y(x) &= W \phi(y(x))+Yx+ b,\label{imp_top}\\
    \mathbf{u}^*(x) &= W_f\phi(y(x)) +Y_{f}x,
\end{align}\end{subequations}
for some activation function $\phi(\cdot)$, weights $W\in \mathbb{R}^{M \times M}$, $Y\in \mathbb{R}^{M \times n}$, $W_f\in \mathbb{R}^{Nm \times M}$, $Y_f\in \mathbb{R}^{Nm \times n}$ and biases   $b\in \mathbb{R}^{M}$,  $b_f\in \mathbb{R}^{Nm}$. 
\end{definition}
%
The \emph{implicit} term comes from the fact that~\eqref{imp_top} is an implicit equation in general (in certain cases, such as when the weight matrix  $W\in \mathbb{R}^{M \times M}$ has a strict upper- or lower-triangular structure, then $\mathbf{u}^*(x)$ again becomes an explicit function of $x$, as discussed in \cite{el2021implicit}). Compared to explicit NNs, such as the classical feed-forward and recurrent architectures, evaluating an implicit NN may involve solving a system of nonlinear equations, since the weight matrices $W\in \mathbb{R}^{M \times M}$ may be dense. 
Implicit equations can be difficult to solve and conditions for the well-posedness of the implicit equation must be verified. Sufficient conditions to establish well-posedness of the implicit neural network of \eqref{imp_def_2} can be derived by adapting the method from  \cite{el2021implicit} involving the weight matrix $W$. 




\begin{remark}
A key aspect of this paper is the observation that, in general, the solution of an optimal control problem is the output of an \textit{implicitly} defined function.  Only in particular cases, such as the robust control problems of  \cite{pates2019optimal}, are explicit solutions to control problems known. For the other cases, numerical algorithms are required to extract the solutions. The implicit nature of solving optimal control problems suggests that implicit neural networks should provide a more natural framework to express the solutions than the more commonly used explicit neural networks. 
\end{remark}


The main result of this paper is to show that MPC problems defined by the solution of the \textit{MPC-QP} admit an exact interpretation in terms of an  \textit{implicit neural network} of the form of Definition \ref{def:imp}. The rest of the paper is concerned with showing this.  The main idea is to build on top of the recent result from \cite{valmorbida2023quadratic} where it was shown that the \textit{MPC-QP} can be expressed as a linear complimentarity problem using \emph{ReLU} functions.

%
\begin{definition}
A function $r(s):  \mathbb{R}^{n_s} \to \mathbb{R}^{n_s}$ is a \emph{ReLU} function if it satisfies
\begin{equation}\label{eq:ramp}
r(s) = \left\{ \begin{array}{l} 0 \quad \mathrm{if} \, s < 0, \\ s \quad \mathrm{if} \, s \geq 0. \end{array} \right. 
\end{equation}
This function is also known as the ramp function. 
\end{definition}
%
\begin{theorem}{\cite{valmorbida2023quadratic}} \label{thm:MPCramp}
The solution of the \textit{MPC-QP}~\eqref{eq:mpcQP} at a state $x\in\Rset^{n}$ can be expressed as the solution of the piece-wise affine system of equations
\begin{subequations} \label{eq:uPWAramp}
\begin{align}
\mathbf{u}^*(x)& =- H^{-1}F^{\top} x  - H^{-1}G^{\top} r(y(x)), \label{eq:uPWArampU}
\end{align}
where $y(x): \mathbb{R}^n \rightarrow \mathbb{R}^{n_c}$ is the solution to the following implicit equation
\begin{align}
y(x) - (I - GH^{-1}G^{\top} )r(y(x))& =   -(Sx  +w),\label{unravel1}
\end{align}
\end{subequations}
with $S = S_u+GH^{-1}F$. 
\end{theorem}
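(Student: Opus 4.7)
The plan is to derive the KKT optimality conditions for the \textit{MPC-QP} of Definition~\ref{def:mpc-qp}, recognise them as a linear complementarity problem (LCP) in the dual variables, and then rewrite this LCP as the implicit \emph{ReLU} equation~\eqref{unravel1}.

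Since $H\succ 0$ makes the objective strictly convex while the constraints are affine, Assumption~1 together with standard convex duality guarantees that the KKT conditions are both necessary and sufficient for optimality. Writing the Lagrangian with dual variable $\lambda\in\Rset^{n_c}$ and differentiating, stationarity yields $\mathbf{u}^*(x) = -H^{-1}F^\top x - \tfrac12 H^{-1}G^\top \lambda$. Introducing the slack $s := S_ux + w - G\mathbf{u}^*(x)$ and using $S = S_u + GH^{-1}F$, substitution produces the LCP $s = Sx + w + \tfrac12 GH^{-1}G^\top \lambda$ with $\lambda,s\geq 0$ and $\lambda^\top s = 0$. A rescaling $\mu := \lambda/2$ absorbs the factor of two so that the coefficient of $\mu$ matches the $GH^{-1}G^\top$ appearing in~\eqref{unravel1} and the primal relation becomes $\mathbf{u}^*(x) = -H^{-1}F^\top x - H^{-1}G^\top \mu$, which is~\eqref{eq:uPWArampU}.

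Next I would apply the componentwise identity $a = r(a-b) \iff a\geq 0,\ b\geq 0,\ ab = 0$. Setting $y := \mu - s$ gives $\mu = r(y)$ and $s = r(y) - y$; substituting into the LCP and rearranging produces exactly $y - (I - GH^{-1}G^\top)r(y) = -(Sx+w)$, which is~\eqref{unravel1}.

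The main obstacle I anticipate is the bidirectional verification together with well-posedness. The forward direction (optimum yields a solution of~\eqref{unravel1}) follows immediately from KKT, but to interpret~\eqref{unravel1} as a representation of the optimum, one also needs that every $y$ solving it reconstructs the unique minimiser: specifically, that the triple $(\mathbf{u}^*(x),s,\mu)$ built from $\mu=r(y)$ satisfies nonnegativity, primal feasibility, complementarity and stationarity, after which sufficiency of KKT combined with uniqueness in Assumption~1 closes the loop. Existence of $y(x)$ for each $x$ then follows from existence of $\mathbf{u}^*(x)$, though the dual multiplier (and hence $y$) may be non-unique in degenerate cases, an issue that does not affect the representation of $\mathbf{u}^*(x)$ itself.
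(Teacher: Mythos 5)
Your proposal is correct and follows the same route the paper sketches in its remark: derive the KKT conditions, invert $H$ to obtain the LCP $0\leq\lambda\perp(Sx+w+GH^{-1}G^{\top}\lambda)\geq 0$, and then encode complementarity via the ramp function to get~\eqref{unravel1}. Your explicit substitution $y=\mu-s$ (so that $\mu=r(y)$ and $s=r(y)-y$) is exactly the step the paper leaves implicit when it writes $\lambda=r(y(x))$, and your remarks on the converse direction and possible non-uniqueness of the multiplier are a sensible completion of that sketch.
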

In the above theorem,  the state $x$ and the optimal MPC control input $\mathbf{u}^*(x)$ are connected through the implicitly defined piece-wise affine function $y(x)$ which solves \eqref{unravel1}. 


\begin{remark}
 The expression~\eqref{unravel1} is obtained from the MPC-QP Karush-Kuhn-Tucker optimality conditions
\begin{subequations}\label{KKT-MPC}
\begin{align}
\label{eq:z}
H\mathbf{z} +G^\top\lambda &= 0,
\\
\label{eq:complKKT}
\lambda_i(S_{(i, \cdot)}x +w_i -G_{(i, \cdot)} \mathbf{z}) &= 0,\\
\label{eq:constposKKT}
Sx +w -G\mathbf{z} &\geq 0,
\end{align}
\end{subequations}
which, thanks to the invertibility of $H$, gives the following  Linear Complementarity Problem
\begin{align}
\label{eq:LCP}
0 \leq \lambda \perp (Sx +w +GH^{-1}G^\top\lambda) &\geq 0.
\end{align}
Relating the Lagrange multipliers $\lambda$ of the \textit{MPC-QP} to the ramp function of some functions of $x$, namely $\lambda = r(y(x))$, then the  complementarity conditions stated above can be connected to the complementarity conditions defining the ramp function, giving~\eqref{unravel1}. See~\cite{valmorbida2023quadratic} for details.
\end{remark}


Equation~\eqref{unravel1} allows the MPC-QP feedback law to be characterised as an implicit function.  An iterative method to solve~\eqref{unravel1} is proposed in~\cite{valmorbida2023quadratic}, providing an alternative to using interior-point methods to solve the point-location problem of explicit MPC~\cite{bemporad2002explicit,Bayat2011}. In the rest of the paper, we exploit the ReLU function in~\eqref{unravel1} to propose an approximate solution to the implicit equation~\eqref{unravel1}. 


By equating terms between Equation \eqref{imp_def_2} of Definition \ref{def:imp} and Equation \eqref{eq:uPWAramp} of Theorem \ref{thm:MPCramp}, it follows that the MPC control action defined by the solution of the \textit{MPC-QP} (Definition \ref{def:mpc-qp}) can be uniquely represented as an implicit neural network  of the form of Definition~\ref{def:imp} with 
\begin{subequations}\label{lem:imp_NN}\begin{align}
    W &= (I_m - GH^{-1}G^{\top} ),  \,Y = -S,\, b = -w , \\
    W_f &= -H^{-1}G^\top,\, Y_f = - H^{-1}F^{\top}, \, b_f = 0,
\end{align}\end{subequations}
and with a \emph{ReLU} activation function $\phi(\cdot) = r(\cdot)$. In fact, from this point on, the activation functions considered in the paper will be \emph{ReLU}s. The matrices in~\eqref{lem:imp_NN}  show that to link MPC and NNs, an exact ``\textit{implicit-to-implicit}'' mapping between MPC and NNs can be obtained, unlike the  ``\textit{explicit-to-explicit}'' mappings usually considered which either involve approximation errors, \eg in  \cite{fahandezh2020proximity,karg2020efficient}, or are restrictive, e.g. for single state systems \cite{moritz2}. This observation connecting the implicit neural network weights \eqref{lem:imp_NN} to the \textit{MPC-QP} is the main result of this paper.

\section{Explicit approximation of the implicit neural network} \label{sec:explicit}

While the implicit NN of Definition \ref{def:imp} parameterised by \eqref{lem:imp_NN} exactly captures the solution of the \textit{MPC-QP},  it can be argued that implicit NNs are currently not yet widely used in practice. This is primarily because evaluating them requires solving an implicit system of equations which can be challenging. Another issue with implicit neural networks is that most  of the common NN architectures used for control, such as recurrent and feed-forward neural networks, are explicit and have many well-established tools for training, evaluating and pruning them, unlike for implicit NNs. 
These limitations motivate the following results  on ``\textit{unravelling}'' the implicit NN of Equation \eqref{lem:imp_NN} into an approximating explicit NN. 


The first step towards unravelling the implicit NN of Definition \ref{def:imp} with \eqref{lem:imp_NN} is to write \eqref{unravel1} as 
\begin{align}\label{exp_unrav}
    y(x) = D\phi(y(x)) + \zeta
\end{align}
with $D \in \mathbb{S}^{m} = (I_m-GH^{-1}G^\top)$ and $\zeta \in \mathbb{R}^m= -(Sx+w)$.  The interpretation of the implicit NN as the equilibrium  of a forced dynamical system, \cite{el2021implicit, revay2023recurrent}, is adopted to generate an approximating explicit NN. With this interpretation, the approximating explicit NN is structured as
\begin{align}\label{explicit:iterates_orig}
    w[j+1] = D\phi(w[j]) + \zeta + f(w[j]),
\end{align}
with the hidden layer index $j = 0, \,1,\, \dots,\, J$. In the above, the function  $f(\cdot)$  is a feedback control policy chosen by the user to accelerate the convergence  of $w[j]$ towards $y(x)$. The benefits of accelerated convergence are that shallower neural networks can be used for a given level of solution accuracy. 

\subsection{Error dynamics}
The problem is then to design the control action $f(\cdot)$ such that the dynamics of the approximating explicit NN \eqref{explicit:iterates_orig} converge to the output of the implicit one \eqref{exp_unrav}. For this,  the solution of the implicit NN \eqref{exp_unrav} is subtracted from the iterates of \eqref{explicit:iterates_orig} to give the error $e[j] = w[j] -y(x)$ evolving through the layers $j$ by $e[j+1] =   D(\phi(w[j])-\phi(y(x))+ f(w[j])$.
In this paper, the controller is structured as $f(w[j]) = K(w[j]-D\phi(w[j])-\zeta)$ for some gain $K\in \mathbb{R}^{M\times M}$. If the error dynamics are convergent with this controller, then the implicit neural network can be understood as the output of the infinitely long explicit neural network. Truncating this infinitely long explicit neural network after a certain number of time steps leads to a  feed-forward neural network with the explicit structure of Definition \ref{def:exp} and a bounded approximation error.






To show this, define
\begin{align}\label{eqn:phi_tilde}
    \tilde{\phi}(e[j]) = \phi(w[j])-\phi(y(x)),
\end{align}
such that the iterates evolve according to
\begin{align}\label{explicit:iterates}
    w[j+1] = D\phi(w[j]) + \zeta + K(w[j]-D\phi(w[j])-\zeta),
\end{align}
and the error dynamics satisfy
\begin{align}
    e[j+1] = &  D(\phi(w[j])-\phi(y(x))+  K(w[j]-D\phi(w[j])-\zeta), \nonumber
    \\
     = & Ke[j] + (I_m-K)D\tilde{\phi}(e[j]). \label{eqn:error_dyns_K}
\end{align}
This is a nonlinear system but notice that $\tilde{\phi}(e[j])$ is a slope-restricted nonlinearity.
\begin{lemma}\label{def:sector}
The function $\tilde{\phi}(e[j]): \mathbb{R}^{n_s}\to \mathbb{R}^{n_s}$ defined in \eqref{eqn:phi_tilde}
 satisfies $\tilde{\phi}(e[j])^\top T(e[j]-\tilde{\phi}(e[j]))~\geq~0$ for all $T \in \mathbb{D}^{m}_+$.
\end{lemma}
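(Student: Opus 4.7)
The plan is to reduce the matrix inequality to a componentwise statement and then exploit the standard slope restriction of the ReLU on $[0,1]$. Since $T = \operatorname{diag}(t_1,\dots,t_m)$ with $t_i \geq 0$, one has
\[
\tilde{\phi}(e[j])^\top T\bigl(e[j]-\tilde{\phi}(e[j])\bigr) \;=\; \sum_{i=1}^{m} t_i\,\tilde{\phi}_i(e[j])\bigl(e_i[j]-\tilde{\phi}_i(e[j])\bigr),
\]
so it suffices to show that each scalar term $\tilde{\phi}_i(e_i - \tilde{\phi}_i)$ is nonnegative. Writing $a = w_i[j]$ and $b = y_i(x)$, this amounts to the scalar claim
\[
\bigl(r(a)-r(b)\bigr)\Bigl((a-b)-\bigl(r(a)-r(b)\bigr)\Bigr) \;\geq\; 0 \qquad \forall\,a,b\in\Rset.
\]

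The cleanest way I would present this is to invoke the slope restriction property of the ReLU, namely that for all $a,b\in\Rset$ with $a\neq b$,
\[
0 \;\leq\; \frac{r(a)-r(b)}{a-b} \;\leq\; 1.
\]
Setting $\alpha = r(a)-r(b)$ and $\beta = a-b$, this gives $\alpha\beta \geq \alpha^2$, which is exactly the desired scalar inequality $\alpha(\beta-\alpha) \geq 0$. If a more self-contained argument is preferred, the slope restriction itself follows from a short case split on the signs of $a$ and $b$: when both are nonnegative the ratio is $1$; when both are negative the numerator is $0$; and in the mixed cases a direct computation shows the product $\alpha\beta$ equals $-ab$ or $a(-b)$, hence is nonnegative, while $|\alpha|\leq |\beta|$ because only one of $a,b$ contributes to the numerator.

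Combining the componentwise bound with the nonnegativity of the $t_i$ and summing yields the lemma. I do not anticipate any real obstacle: the whole statement is a restatement of the well-known fact that the ReLU is a monotone, nonexpansive scalar map, and the diagonal positivity of $T$ makes the multipliers decouple across coordinates. The only point worth being careful about is handling the case $a = b$ (where the ratio is undefined) separately, in which both sides of the scalar inequality vanish trivially.
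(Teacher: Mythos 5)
Your proof is correct, but it follows a different route from the paper's. You decouple the quadratic form componentwise (using the diagonality and nonnegativity of $T$) and then invoke the incremental slope restriction of the ReLU, $0 \leq \bigl(r(a)-r(b)\bigr)/(a-b) \leq 1$, to get $\alpha(\beta-\alpha)\geq 0$ for $\alpha = r(a)-r(b)$, $\beta = a-b$. The paper instead works at the vector level: it expands $\tilde{\phi}(e)^{\top}T\bigl(e-\tilde{\phi}(e)\bigr)$ with $e = w - y$ and $\tilde{\phi}(e)=\phi(w)-\phi(y)$, kills the two ``diagonal'' terms using the exact complementarity identity $\phi(s)^{\top}T\bigl(s-\phi(s)\bigr)=0$, and then argues that the surviving cross terms $-\phi(w)^{\top}T\bigl(y-\phi(y)\bigr)$ and $-\phi(y)^{\top}T\bigl(w-\phi(w)\bigr)$ are nonnegative because $\phi(s)\geq 0$ and $s-\phi(s)\leq 0$ componentwise. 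The two arguments buy slightly different things: yours is more elementary and immediately generalises to any componentwise nonlinearity that is slope-restricted to $[0,1]$ (e.g.\ saturation), whereas the paper's exploits the ReLU-specific complementarity property $\phi(s)\perp\bigl(s-\phi(s)\bigr)$, which is the same structural fact underlying the LCP reformulation in Theorem~\ref{thm:MPCramp}, so it is thematically tied to the rest of the paper. One small inaccuracy in your auxiliary case-split: in the mixed case $a\geq 0 > b$ the product is $\alpha\beta = a(a-b) = a^{2}-ab$, not $-ab$; the sign conclusion is unaffected, and your primary argument via the slope restriction does not rely on this computation.
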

\begin{proof}
First, note that the \emph{ReLU} function satisfies the equality
\begin{align}\label{ReLU:eq}
    \phi(s)^\top T(s-\phi(s)) = 0, \quad \forall s \in \mathbb{R}^m, T \in \mathbb{D}_+^m.
\end{align}
Then
\begin{subequations}\begin{align}
    \tilde{\phi}(e[j])^\top T(e[j]-\tilde{\phi}(e[j])) & 
    =-(\phi(w[j]))^\top T(y(x)-\phi(y(x))) ,
    \\ & =  -\phi(y(x)))^\top T(w[j]-\phi(w[j])) \geq 0,\nonumber
\end{align}\end{subequations}
since $T \in \mathbb{D}^m_+$, the \emph{ReLU} function satisfies $\phi(s) \geq 0$, $s-\phi(s) \leq 0$ for all $s \in \mathbb{R}$  and using \eqref{ReLU:eq}.
\end{proof}


The error dynamics of \eqref{eqn:error_dyns_K} can then be understood as a discrete-time Lurie system. As such, for a given gain $K$, the linear matrix inequalities from papers such as \cite{haddad1993explicit,carrasco2019convex,drummond2023generalised} could then be used to certify asymptotic convergence and the errors bounded. If these LMIs are satisfied, then the feed-forward neural network of \eqref{explicit:iterates} is guaranteed to converge towards the solution of the \textit{MPC-QP} of Definition \ref{def:mpc-qp}. Page limits of this paper prevent an in-depth analysis of the computation of the gains $K$ which is left to future work. 

\begin{remark}
Several results, such as \cite{karg2020efficient,cao2020deep}, have emphasized the importance of explicit neural network \textit{depth}, as in the number of hidden layers, to gain a good approximation of an MPC control law. The analysis discussed in this section reinforces this notion, by showing how an infinitely deep feed-forward neural network, of the form of \eqref{explicit:iterates}, can exactly represent the solution of the \textit{MPC-QP}. 


\end{remark}

\section{Converse results: From neural networks to MPC}
\label{sec:conv-results:-from}

In this section, a converse result to the statement of Section~\ref{sec:MPC} is introduced. Specifically,  the following problem is considered: Given an implicit neural network $f\colon\Rset^n\to\Rset^{Nm}$ of the form of Definition~\ref{def:imp}, can the cost defining the MPC control law be recovered? The answer to this question is affirmative. The argument is the following: the piece-wise affine functions are dense in the set of continuous functions, therefore there exists a piece-wise linear approximation $\xi\colon\Rset^{n}\to\Rset^{Nm}$ to $f(\cdot)$ such that $\|f - \xi \|< \epsilon $ in a suitable norm. The function $\xi(\cdot)$ generates a partition of the state space $\tilde{P}_i$ for $i\in\tilde{\mc{I}}$ and a collection of affine control laws $\{\tilde{E_i},\tilde{\omega}_i\}_{i\in\tilde{{P}}}$. To recover the cost matrices, an observation regarding the partition of the state space is made.
\begin{lemma}
  There exists a non-empty proper subset $\emptyset\neq\mc{J}\subset\mc{I}$ such that the associated control laws satisfy  $j\in\mc{J}$, $\kappa_j(x) = \omega_j$ for each $x\in\tilde{P}_j$.
  \label{lem:control_max}
\end{lemma}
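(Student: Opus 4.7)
The natural reading of the lemma is to take $\mc{J}$ to be precisely the set of indices $j\in\tilde{\mc{I}}$ for which the affine piece of the piecewise-linear approximation degenerates to a constant, \ie $\tilde{E}_j = 0$ so that $\kappa_j(x) = \tilde{\omega}_j$ for every $x\in\tilde{P}_j$. With this identification, the lemma splits into two independent claims: $\mc{J} \subsetneq \tilde{\mc{I}}$ (properness) and $\mc{J} \neq \emptyset$ (non-triviality). I would prove these separately, and I expect the second claim to carry almost all of the difficulty.

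\textbf{Properness} is a short continuity argument. Assume for contradiction that $\mc{J} = \tilde{\mc{I}}$, so every piece of $\xi$ is a constant $\tilde{\omega}_i$. Since $\xi$ is continuous (a standard property of PWA functions assembled from the partition $\{\tilde{P}_i\}$) and $\mathbb{R}^n$ is path-connected, the locally-constant function $\xi$ must be globally equal to a single vector $\omega^\star$. The approximation bound $\|f-\xi\|<\epsilon$ then forces $f$ to be within $\epsilon$ of a constant map. For a non-degenerate implicit NN corresponding to a genuine MPC policy (\ie $F$ not identically zero in the cost of Definition~\ref{def:mpc-qp}, so that the unconstrained optimizer $-H^{-1}F^\top x$ is non-constant), choosing $\epsilon$ smaller than the oscillation of $f$ on any two well-separated states produces the contradiction, so $\mc{J}\subsetneq\tilde{\mc{I}}$.

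\textbf{Non-emptiness} is the part I would expect to be the main obstacle, because it is where the underlying MPC structure enters. The idea is to exploit Theorem~\ref{thm:MPCramp}: the optimal law is $\mathbf{u}^*(x) = -H^{-1}F^\top x - H^{-1}G^\top r(y(x))$ with $y(x)$ satisfying \eqref{unravel1}. On any polyhedral region of state space for which the active set of inequalities is fixed, the vector $r(y(x))$ evaluates to an affine function of $x$, and one shows that when a sufficiently large active set $\mc{A}$ forms (corresponding to input saturation), the combined slope $-H^{-1}F^\top + H^{-1}G_{\mc{A}}^\top (G_{\mc{A}}H^{-1}G_{\mc{A}}^\top)^{-1}(S_{u,\mc{A}} - G_{\mc{A}}H^{-1}F^\top)$ vanishes, leaving only the constant contribution $H^{-1}G_{\mc{A}}^\top(G_{\mc{A}}H^{-1}G_{\mc{A}}^\top)^{-1}w_{\mc{A}}$. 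Typical MPC formulations with absolute input bounds ($S_{u,\mc{A}} = 0$ for the saturating rows) make this slope-cancellation automatic. The PWA approximation $\xi$, because it is built to approximate $f$ within $\epsilon$ and $f$ is itself PWA with such saturation regions, must contain at least one region $\tilde{P}_j$ on which the slope is zero (any $\epsilon$ smaller than half the minimum jump between distinct affine pieces of $f$ forces the partition of $\xi$ to refine that of $f$ on a compact set).

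\textbf{Main obstacle.} The delicate step is the last one: turning the structural existence of a constant piece of $f$ into a constant piece of the \emph{approximation} $\xi$. This requires quantifying the approximation quality relative to the geometry of $f$'s partition, typically by restricting to a compact set of interest and invoking a finite-piece refinement argument. Handling unbounded regions, and the case where constraints are fully state-dependent (so that no constant region of $f$ exists), would force additional hypotheses that I would make explicit at the start of the proof.
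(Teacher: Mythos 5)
The paper does not actually supply a proof of Lemma~\ref{lem:control_max}: it is asserted, and the only justification offered is the sentence that follows it, identifying the regions in $\mc{J}$ with ``saturation'' of the control input. Your reconstruction is therefore consistent with the paper's evident intent --- taking $\mc{J}$ to be the indices of pieces with zero slope, and producing such pieces from fully-active (saturated) constraint sets --- and your slope-cancellation computation is correct: when the active rows $G_{\mc{A}}$ determine $\mb{u}$ completely and $S_{u,\mc{A}}=0$, the term $H^{-1}G_{\mc{A}}^\top(G_{\mc{A}}H^{-1}G_{\mc{A}}^\top)^{-1}G_{\mc{A}}H^{-1}F$ reduces to $H^{-1}F$ and the affine piece degenerates to the constant $G_{\mc{A}}^{-1}w_{\mc{A}}$. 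Your properness argument (a continuous, locally constant PWA function on a connected domain is globally constant, contradicting non-constancy of $f$) is also fine.

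Two remarks on the gaps. First, the hypotheses you flag as missing really are missing, and not just from your proof: as stated the lemma concerns an arbitrary implicit NN $f$ of Definition~\ref{def:imp} and an $\epsilon$-close PWA approximant $\xi$, and it is simply false at that level of generality (take $W_f=0$, so $f$ is globally affine and no piece of any reasonable approximant is constant; or take state-dependent constraints with $S_u\neq 0$ on every row, so no saturation region of the MPC law is constant). Any complete proof must assume that $f$ realises an MPC law with at least one fully-determining active set whose rows have $S_{u,\mc{A}}=0$ and whose critical region is non-empty --- the last point itself needs a non-degeneracy condition on $H^{-1}F^\top$ so that the unconstrained optimiser actually drives all of those constraints active for some $x$. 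Second, the step you identify as the main obstacle --- transferring a constant piece of $f$ to a constant piece of $\xi$ --- can be avoided entirely: a ReLU implicit network is already piecewise affine, so one may take $\xi=f$ (with $\epsilon=0$) and read the partition $\{\tilde{P}_i\}$ directly off the active-set regions of the LCP \eqref{unravel1}. The density-of-PWA-functions framing in the paper is a red herring for this lemma; dropping it removes the refinement argument you were worried about and leaves only the (necessary) structural assumptions above.
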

The importance of the above lemma is that focus only has to be given to the interior of the feasible region, with the outer parts representing a ``saturation'' of the control input. In the interior of $\mc{X}$, there exists a subset of indices $\mc{K} = \mc{I}\setminus\mc{J}$ such that $\kappa_i(x) = K_ix + \beta_i$ for all $i\in\mc{K}$. The closed loop system in each of the regions $P_i$ is characterised by $\Phi_i = (A+BE_i)$ which satisfy the following LMI:
\begin{equation}
  \Phi_i^\top P_i\Phi_i - P_i \succ - (Q + E_i^\top R E_i),~\forall i\in\mc{K},
  \label{eq:LMI_partition}
\end{equation}
with $P_i,Q,R\prec 0$ for all $i\in\mc{K}$. The solution of this system of LMIs allow us to build common cost matrices $(Q,R)$ that are part of the MPC formulation.


\section{Numerical Example}
Consider the linear time-invariant system from \cite{drummond2022bounding}
\begin{equation}
    \begin{bmatrix}x_1[k+1] \\ x_2[k+1] \end{bmatrix}=
        \begin{bmatrix}
            4/3 & -2/3\\
            1 & 0
        \end{bmatrix}
      \begin{bmatrix}x_1[k] \\ x_2[k] \end{bmatrix}+
        \begin{bmatrix}
            0 \\ 1
        \end{bmatrix}
    u[k],
\end{equation}
to be controlled using an MPC defined by the \textit{MPC-QP} of Definition \ref{def:mpc-qp}.
%
%
%
Set the horizon length $N = 10$ and the control action to be saturated at $-10\leq \mathbf{u} \leq 10$, so $G~=~I_m~\otimes~\begin{bmatrix}0.1 & -0.1 \end{bmatrix}^\top$, $w=[1,\,  1,\, \dots \,1 ,\, 1 ]^\top$ and $S_u = 0$. The MPC quadratic cost function of \eqref{QP1} is parameterised by
$$
\widetilde{P} =  \begin{bmatrix}7.1667  & -4.2222\\ -4.2222 & 4.6852\end{bmatrix},~    \widetilde{Q}_k = \begin{bmatrix}1  & -2/3\\ -2/3  & 3/2\end{bmatrix},~
\widetilde{R}_k = 1.
$$
The goal of the numerical example is to verify the claim that this MPC problem can be represented as the implicit neural network of Definition \ref{def:imp} defined by \eqref{lem:imp_NN} which can itself be approximated by the explicit one of \eqref{explicit:iterates}.

 \begin{figure}[t]
         \centering
         \includegraphics[width=1\textwidth]{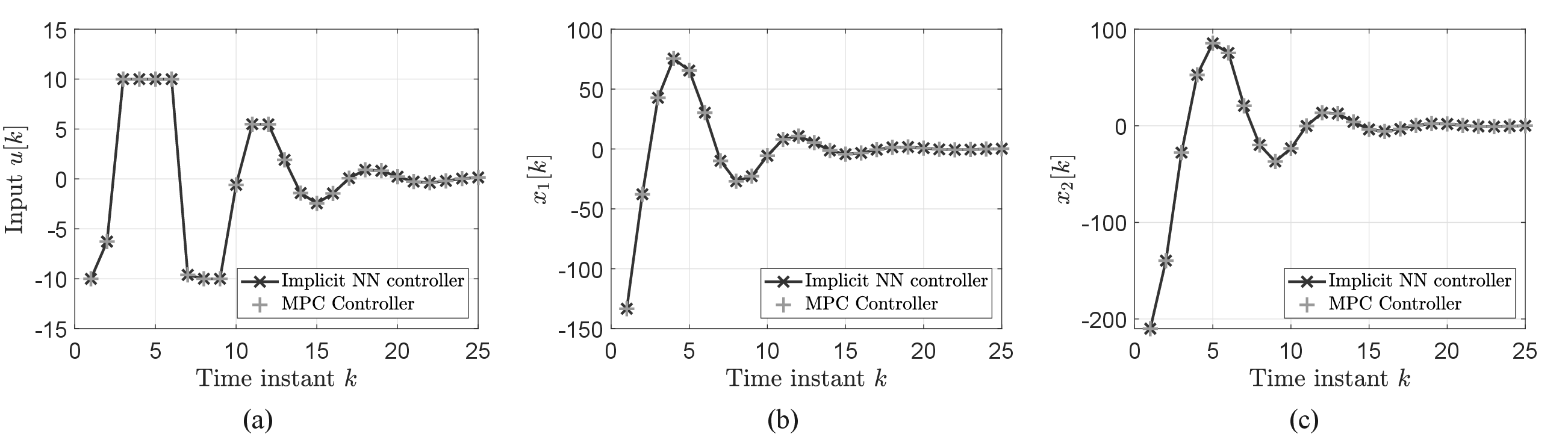}
          \caption{Comparison between the MPC controller and the implicit neural network defined by \eqref{lem:imp_NN} for the numerical example. The input sequences $u[k]$, and states $x_1[k]$, $x_2[k]$ are shown for both policies.  The equivalence between the two simulations validates the results of the paper. Note that the implicit neural network captures the input constraints imposed by the MPC policy, as predicted. }
          \label{fig:sim}
     \end{figure}


The first step is to show that the explicit neural network of \eqref{explicit:iterates} does indeed  converge towards the unique solution of the implicit neural network of \eqref{exp_unrav}. From the initial condition $w[0] = [1,1]^\top$, the curve in Figure \ref{fig:combo}(a) shows the convergence rates of the various methods defined by the ``control gains'' $K$. Three different gains are considered, $K = -0.9I_m$, $K = 0$ and $K = 0.2I_m$. The figure shows the convergence of the residuals $w[j]-D\phi(w[j])-\zeta$ for each gain, indicating that, for a sufficient layer depth $J$, the explicit neural networks did converge to the unique solutions of the implicit neural network. Fastest convergence was achieved with $K = -0.9I_m$. LMI-based conditions to optimise the gains $K$ will be developed in future work.  


With the explicit neural networks of  \eqref{explicit:iterates} shown to give a good approximation of the implicit neural networks of \eqref{exp_unrav} (as long as they are sufficiently deep), the next step is to show that this implicit neural network (parameterised according to \eqref{lem:imp_NN}) represents the MPC control action defined by the \textit{MPC-QP} of Definition \ref{def:mpc-qp}. This is illustrated in the simulation of Figure \ref{fig:sim} as well as in Figures \ref{fig:combo}(b) and  \ref{fig:combo}(c) which also show the equivalence, with the two control actions compared for various different state values. For this simulation, the initial condition was  $x = [-200,\,-200]^\top$, and controllers defined by the MPC of Definition \ref{def:mpc-qp}  and the explicit neural network of \eqref{explicit:iterates} were run. Using the depth $J = 1000$,  Figure \ref{fig:sim}, \ref{fig:combo}(b) and  \ref{fig:combo}(c) shows the equivalence between the inputs $u[k]$ generated by both the MPC and NN controllers (with both saturating at certain instants $k$), which then also generated equivalent state dynamics for both $x_1[k]$ and $x_2[k]$. The equivalence between these two closed-loop trajectories reinforces the conclusion of Equation \eqref{lem:imp_NN} that the \textit{MPC-QP} can be represented as the implicit neural network of \eqref{exp_unrav}, and that the analytic expression of Equation \eqref{lem:imp_NN} for the weights and biased hold.

 \begin{figure}[t]
         \centering
         \includegraphics[width=1\textwidth]{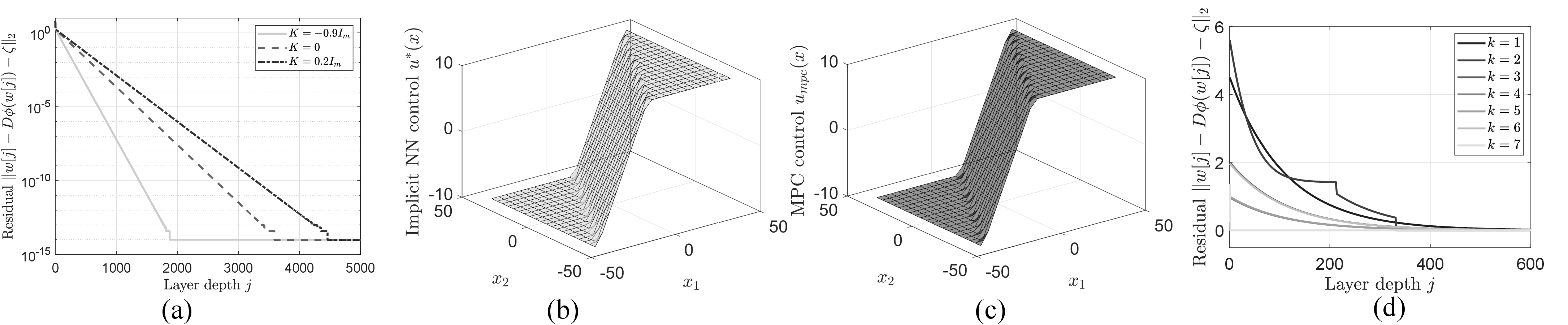}
          \caption{(a) Residuals from unravelling the implicit neural network into an explicit one using the gains $K = -0.9I_m$, $K = -0.9I_m$ \& $K = 0.2I_m$. (b) The control action from the implicit neural network of the numerical example as a function of the state-space. (b) The control action from the MPC of the numerical example as a function of the state-space. (d) Residuals of the implicit neural network during the simulation as $k$ evolves. }
       \label{fig:combo}
     \end{figure}

 Finally, Figure \ref{fig:combo}(d) shows the convergence of the approximating explicit neural network's residuals during the simulation (as in, as $k$ increased). In the simulation, the explicit neural network was warm started, meaning that the initial condition $w[0]$ was based upon the final value $w[J]$ obtained from the previous time instant $k$. Figure \ref{fig:combo}(d) shows the accelerated convergence of these residuals as the simulation progressed with $k$ increasing, and these residuals were found to be zero for $k >7$. This figure shows the accelerated convergence rates of the explicit neural network as the state dynamics converge towards their equilibrium.

\section*{Conclusions}
Linear quadratic model predictive control (MPC) was shown to admit a representation as an implicit neural network. Analytic expressions for the weights and biases of this implicit neural network were given and a method to ``\textit{unravel}'' the implicit neural network into an explicit one was also described. Expressions for the weights and biases of the approximating explicit neural network were stated and it was shown that the depth of the explicit network defined the approximation error with respect to the implicit neural network. A numerical example  illustrated the application of the results in practice. A procedure to recover the MPC cost matrices from piece-wise linear approximations of explicit neural networks was also developed. Together, these results give a constructive way to represent linear quadratic model predictive control using neural networks, providing an explicit link between model-based and data-driven control.

\acks{Ross Drummond was supported by a UK Intelligence Community Research Fellowship from the Royal Academy of Engineering. }

\bibliography{sample.bib}

\end{document}